\title{The Impossibility of Extending Random Dictatorship to Weak Preferences}
\author{
Florian Brandl\\
TU M\"unchen\\
Germany\\
\texttt{\small brandlfl@in.tum.de}
\and
Felix Brandt\\
TU M\"unchen\\
Germany\\
\texttt{\small brandtf@in.tum.de}
\and
Warut Suksompong\\
Stanford University\\
USA\\
\texttt{\small warut@cs.stanford.edu}
}
\date{}
\newcommand{\pref}{\succcurlyeq}
\newcommand{\sPref}{\succ}
\newcommand{\indiff}{\sim}
\newcommand{\rd}[1][]{\ifthenelse{\equal{#1}{}}{\mathit{RD}}{\mathit{RD}(#1)}}
\newcommand{\rsd}[1][]{\ifthenelse{\equal{#1}{}}{\mathit{RSD}}{\mathit{RSD}(#1)}}
\newcommand{\sd}[1][]{\ifthenelse{\equal{#1}{}}{\mathit{SD}}{\mathit{SD}(#1)}}
\begin{document}

\maketitle

\begin{abstract}
Random dictatorship has been characterized as the only social decision scheme that satisfies efficiency and strategyproofness 
when individual preferences are strict. We show that no extension of random dictatorship to weak preferences satisfies these properties, even when significantly weakening the required degree of strategyproofness.
\end{abstract}

{\bf Keywords:} Random dictatorship, stochastic dominance, Pareto-efficiency, strategyproofness

{\bf JEL Classifications Codes:} C6, D7, D8

\section{Introduction}

One of the most celebrated results in microeconomic theory is the Gibbard-Satterthwaite theorem \citep{Gibb73a,Satt75a}, which states that every strategyproof and Pareto-optimal social choice function is a dictatorship. However, the theorem crucially relies on the assumption that outcomes are deterministic. \citet{Gibb77a} later considered social decision schemes, \ie social choice functions that return lotteries over the alternatives, and showed that the class of strategyproof and \emph{ex post}  efficient functions extends to all random dictatorships. This class contains a unique rule that treats all agents equally: the uniform random dictatorship, henceforth \emph{random dictatorship ($\rd$)}, where an agent is chosen uniformly at random and his favorite alternative is implemented as the social choice. Gibbard's notion of strategyproofness is based on \emph{stochastic dominance} and prescribes that no voter can obtain more utility by misrepresenting his preferences no matter what his utility function is (as long as it is consistent with his ordinal preferences). Another implicit assumption in Gibbard's theorem is the anti-symmetry of individual preferences. Characterizations of strategyproof social decision schemes for the case when agents are allowed to express indifference have also been explored. In the context of cardinal decision schemes, \citet{DPS07a} characterize $\rd$ for the domain in which each agent has a unique top choice. For arbitrary indifferences, \citet{Hyll80a} and \citet{Nand12a} show that the only reasonable strategyproof social decision schemes are weak random dictatorships. We refer to \citet{Nand12a} for a discussion of these results. Perhaps the best-known generalization of $\rd$ to weak preferences is \emph{random serial dictatorship ($\rsd$)} where a permutation of agents is chosen uniformly at random and agents narrow down the set of alternatives in that order to their most preferred alternatives among the remaining alternatives. $\rsd$ is also \emph{ex post} efficient and strategyproof with respect to stochastic dominance. However, in contrast to $\rd$ it is not efficient with respect to stochastic dominance, \ie there might be a lottery that yields more expected utility for all agents. This failure of efficiency was first observed by \citet{BoMo01a} in the context of random assignment. We show that this is not a weakness specific to $\rsd$ but in fact all fair generalizations of $\rd$ violate either efficiency or strategyproofness, even when significantly weakening the required degree of strategyproofness.\footnote{For example, this also explains why another strategyproof extension of $\rd$ to weak preferences, the maximal recursive rule \citep{Aziz13b}, violates efficiency.}

\section{Preliminaries}

Let $N=\{1,\dots,n\}$ be a set of agents with preferences over a finite set $A$ with $|A| = m$. 
The preferences of agent~$i \in N$ are represented by a complete and transitive \emph{preference relation}~${\pref}_i\subseteq A\times A$. 
The set of all preference relations will be denoted by~$\mathcal{R}$.
In accordance with conventional notation, we write~$\sPref_i$ for the strict part of~$\pref_i$, \ie~$a \sPref_i b$ if~$a \pref_i b$ but not~$b \pref_i a$ and~$\indiff_i$ for the indifference part of~$\pref_i$, \ie~$a \indiff_i b$ if~$a \pref_i b$ and~$b \pref_i a$. 
We will compactly represent a preference relation as a comma-separated list with all alternatives among which an agent is indifferent placed in a set. 
For example $a \sPref_i b \indiff_i c$ will be written as 
${\pref}_i \colon a, \{b,c\}$.
A preference relation $\pref_i$ is \emph{strict} if $x \sPref y$ or~$y\sPref x$ for all distinct alternatives $x,y$.
A \emph{preference profile} $R = ({\pref}_1,\dots, {\pref}_n)$ is an $n$-tuple containing a preference relation $\pref_i$ for each agent $i \in N$. 
The set of all preference profiles is thus given by $\mathcal{R}^n$. 
By $R_{-i}$ we denote the preference profile obtained from $R$ by removing the preference relation of agent $i$, \ie $R_{-i} = R\setminus\{(i,{\pref}_i)\}\text.$

Let furthermore $\Delta(A)$ denote the set of all \emph{lotteries} (or \emph{probability distributions}) over $A$ and, for a given lottery $p\in\Delta(A)$,
$p(x)$ denote the probability that $p$ assigns to alternative~$x$. Lotteries will be written as convex combinations of alternatives, \eg $\nicefrac{1}{2}\,a + \nicefrac{1}{2}\,b$ denotes the lottery $p$ with $p(a)=p(b)=\nicefrac{1}{2}$.
is the set of all alternatives to which $p$ assigns positive probability.

Our central object of study are \emph{social decision schemes}, \ie functions that map the individual preferences of the agents to a lottery over alternatives. Formally,
a social decision scheme (SDS) is a function $f \colon \mathcal{R}^n \rightarrow \Delta(A)$.
A minimal fairness condition for SDSs is \emph{anonymity}, which requires that $f(R)=f(R')$ for all $R,R'\in \mathcal{R}^n$ and permutations $\pi\colon N\rightarrow N$ such that ${\pref}'_i={\pref}_{\pi(i)}$ for all $i\in N$. 
Another fairness requirement is \emph{neutrality}. 
For a permutation of alternatives $\sigma$ and a preference relation ${\pref}_i$, $\sigma(x) \pref_i^{\sigma} \sigma(y)$ if and only if $x \pref_i y$. 
Then, an SDS $f$ is \emph{neutral} if $f(R)(x) = f(R^{\sigma})(\sigma(x))$ for all $R\in\mathcal{R}^n$, $x \in A$, and permutations $\sigma\colon A\rightarrow A$.

Two well-studied SDSs are \textit{Random Dictatorship ($\rd$)} and \emph{Random Serial Dictatorship (RSD)}. $\rd$ is defined when all agents have a unique favorite alternative. This includes the domain of strict preferences as a subclass. The lottery returned by $\rd$ is obtained by choosing an agent uniformly at random and returning that agent's favorite alternative. 
$\rsd$ is an extension of $\rd$ to the full domain of preferences. $\rsd$ operates by first choosing a permutation of the agents uniformly at random. Starting with the set of all alternatives, it then asks each agent in the order of the permutation to choose his favorite alternative(s) among the remaining alternatives. If more than one alternative remains after taking the preferences of all agents into account, $\rsd$ uniformly randomizes over those alternatives.
Formally, we obtain the following recursive definition.
\begin{eqnarray*}
	\rsd(R,X)= 
	\begin{cases}
		\sum\limits_{x \in X} \frac{1}{|X|} ~x &\text{if }{R}=\emptyset\text{,}\\[1em]
		\sum\limits_{i=1}^{|R|} \frac{1}{|R|} ~\rsd(R_{-i},\max_{\pref_i}(X)) &\text{otherwise,}
	\end{cases}
\end{eqnarray*}
and $\rsd(R)=\rsd(R,A)$.
The formal definition of $\rd$ is a special case of the above definition of $\rsd$.
In contrast to deterministic dictatorships, $\rsd$ is anonymous and is frequently used in subdomains of social choice that are concerned with the fair assignment of objets to agents \citep[see, \eg][]{BCKM12a}.

\section{Efficiency and Strategyproofness}

In order to reason about the outcomes of SDSs, we need to make assumptions on how agents compare lotteries. A common way to extend preferences over alternatives to preferences over lotteries is \emph{stochastic dominance ($\sd$)}. A lottery $\sd$-dominates another if, for every alternative $x$, the former is at least as likely to yield an alternative at least as good as $x$ as the latter. Formally, 
\[p \pref_i^{\sd} q  \text{ iff for all }x\in A,
\sum_{y \colon y \pref_i x} p(y) \geq \sum_{y \colon y \pref_i x} q(y)\text{.}
\]
It is well-known that $p \pref_i^{\sd} q$ if and only if the expected utility for $p$ is at least as large as that for $q$ for every von Neumann-Morgenstern function consistent with $\pref_i$.

Thus, for the preference relation ${\pref}_i\colon a,b,c$, we for example have that 
\[
(\nicefrac{2}{3}\, a + \nicefrac{1}{3}\, c) \sPref_i^{\sd} (\nicefrac{1}{3}\, a + \nicefrac{1}{3}\, b + \nicefrac{1}{3}\, c),
\]
while $\nicefrac{2}{3}\, a + \nicefrac{1}{3}\, c$ and $b$ are incomparable.

In this section, we define the notions of efficiency and strategyproofness considered in this paper.
The two notions of efficiency defined below are generalizations of Pareto-optimality in non-probabilistic social choice.
An alternative is \emph{Pareto-dominated} if there exists another alternative such that all agents weakly prefer the latter to the former with a strict preference for at least one agent.
An SDS is \emph{ex post efficient} if it assigns probability zero to all Pareto-dominated alternatives \citep[see \eg][]{Gibb77a,BMS05a}.

Second, we define efficiency with respect to stochastic dominance. 
A lottery $p$ is \emph{$\sd$-efficient} if there is no other lottery $q$ that is weakly $\sd$-preferred by all agents with a strict preference for at least one agent, \ie $q\pref_i^{\sd} p$ for all $i\in N$ and $q\sPref_i^{\sd} p$ for some $i\in N$. 
It is well-known that $\sd$-efficiency is stronger than \emph{ex post} efficiency.
An SDS is $\sd$-efficient if it returns an $\sd$-efficient lottery for every preference profile \citep[see, \eg][]{BoMo01a,ABB13d,ABB14b}.

For better illustration, consider $A=\{a,b,c,d\}$ and the preference profile $R=({\pref}_1,\dots,{\pref}_4)$,
	\begin{align*}
		\pref_1\colon &\{a,c\},b,d&
		\pref_2\colon &\{b,d\},a,c&
		\pref_3\colon &a,d,b,c&
		\pref_4\colon &b,c,a,d
	\end{align*}
	Observe that no alternative is Pareto-dominated, \ie for instance the uniform lottery $\nicefrac{1}{4}\, a+\nicefrac{1}{4}\, b+\nicefrac{1}{4}\, c+\nicefrac{1}{4}\, d$ is \emph{ex post} efficient. 
	On the other hand, the uniform lottery is not $\sd$-efficient as all agents strictly $\sd$-prefer $\nicefrac{1}{2}\, a+\nicefrac{1}{2}\, b$.

Strategyproofness prescribes that no agent can obtain a more preferred outcome by misrepresenting his preferences. 
There are two notions of strategyproofness associated with stochastic dominance; they differ in the interpretation of incomparabilities and ties.
The weak notion, which we will just call $\sd$-strategyproofness, prescribes that no agent can obtain an $\sd$-preferred outcome by lying about his preferences. 
Formally, an SDS $f$ is $\sd$-manipulable if there exist $R,R'\in\mathcal{R}^n$ and $i\in N$ such that $R_{-i} = R'_{-i}$ and $f(R') \sPref_i^{\sd} f(R)$. 
If an SDS is not $\sd$-manipulable, it is said to satisfy \emph{$\sd$-strategyproofness}. 

However, it may also be interpreted as a successful manipulation if an agent can obtain a lottery that is incomparable (according to stochastic dominance) to the lottery he obtains by reporting his preferences truthfully, since the former yields more expected utility than the latter for \emph{some} (rather than all) consistent utility functions. 
Strong $\sd$-strategyproofness requires that reporting one's preferences truthfully is a weakly dominant strategy.
Formally, an SDS $f$ satisfies \emph{strong $\sd$-strategyproofness} if $f(R) \pref_i^{\sd} f(R')$ for all $R,R'\in\mathcal{R}^n$ and $i\in N$ with $R_{-i} = R'_{-i}$.

It is a well known fact that $\rsd$ (and hence $\rd$) satisfies strong $\sd$-strategyproofness. For the domain of strict preferences, $\rd$ is the unique anonymous and \emph{ex post} efficient SDS that satisfies strong $\sd$-strategyproofness \citep{Gibb77a}. Within this domain $\rd$ is also $\sd$-efficient and hence also the unique anonymous SDS that satisfies $\sd$-efficiency and $\sd$-strategyproofness. More generally, it can be shown that every lottery that only randomizes over alternatives that are uniquely top ranked by some agent is $\sd$-efficient. However, $\rsd$ is not $\sd$-efficient on the full domain of preferences, which can be seen by again considering the example above.
It turns out that $\rsd(R) = \nicefrac{5}{12}\,a + \nicefrac{5}{12}\,b + \nicefrac{1}{12}\,c + \nicefrac{1}{12}\,d = p$. For $q = \nicefrac{1}{2}\,a + \nicefrac{1}{2}\, b$ we have $q\sPref_i^{\sd} p$ for all $i\in N$. Thus $\rsd$ is not $\sd$-efficient. In fact, every agent is \emph{strictly} better off in $q$ no matter what his utility function is (as long as it is consistent with his ordinal preferences). The failure of $\rsd$ to satisfy $\sd$-efficiency has been examined in great detail in the literature \citep[see, \eg][]{BMS05a,Mane08a,Mane09a,ChKo10a,BCKM12a,ABBH12a}.

\section{The Result}

We are now ready to show our main result, namely, that there exists no extension of $\rd$ to weak preferences that maintains its characteristic properties of efficiency and strategyproofness.

\begin{theorem}
\label{thm:main}
	There is no anonymous, neutral, $\sd$-efficient, and $\sd$-strategyproof extension of random dictatorship to the full domain of preferences when $m,n\ge 4$.
\end{theorem}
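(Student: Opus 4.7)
My plan is to exhibit a single profile $R$ on $m=n=4$ for which the four axioms are jointly inconsistent; the case $m,n\ge 4$ then follows by padding with dummy alternatives placed last in every ranking and duplicate agents. I take for $R$ the profile highlighted in Section~3,
\begin{align*}
\pref_1&\colon \{a,c\},b,d, & \pref_2&\colon \{b,d\},a,c, \\
\pref_3&\colon a,d,b,c, & \pref_4&\colon b,c,a,d.
\end{align*}
The joint permutation $\sigma=(a\;b)(c\;d)$ on alternatives together with $\pi=(1\;2)(3\;4)$ on agents is an automorphism of $R$, so anonymity and neutrality give $f(R)(a)=f(R)(b)$ and $f(R)(c)=f(R)(d)$. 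As already computed in the paper, $q=\tfrac{1}{2}a+\tfrac{1}{2}b$ strictly $\sd$-dominates every lottery of the form $\alpha a+\alpha b+\beta c+\beta d$ with $\beta>0$ from every agent's perspective, so $\sd$-efficiency pins $f(R)$ down to $q=\tfrac{1}{2}a+\tfrac{1}{2}b$.

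To derive a contradiction I chase $\sd$-strategyproofness from $R$ towards profiles in which $\rd$ is defined. For $x\in\{a,c\}$ and $y\in\{b,d\}$, let $R^{x,y}$ denote the profile in which agent~1 refines her top-tie to top~$x$ and agent~2 refines hers to top~$y$, and let $R_1^x$, $R_2^y$ denote the intermediate profiles where only one of the two agents refines. Each $R^{x,y}$ has unique tops, so $f(R^{x,y})=\rd(R^{x,y})$ is explicit; in particular $f(R^{c,d})=\tfrac{1}{4}a+\tfrac{1}{4}b+\tfrac{1}{4}c+\tfrac{1}{4}d$ places positive mass on $\{c,d\}$. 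Applying $\sd$-strategyproofness in both directions between each consecutive pair of profiles in the chains $R\rightarrow R_1^x\rightarrow R^{x,y}$ and $R\rightarrow R_2^y\rightarrow R^{x,y}$, together with $\sd$-efficiency at the intermediate profiles and the $\sigma\pi$-symmetry that identifies $R_1^c$ with $R_2^d$ and $R_1^a$ with $R_2^b$, yields a system of upper-contour inequalities on $f(R_1^x)$, $f(R_2^y)$, and ultimately on $f(R)$. The aim is to show that this system forces $f(R)$ to place positive mass outside $\{a,b\}$, contradicting the value fixed in the previous paragraph.

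The main obstacle is this final propagation. Weak $\sd$-strategyproofness only forbids strict $\sd$-improvement, and in fact a single chain $R\to R_1^c\to R^{c,b},R^{c,d}$ in isolation admits the consistent intermediate value $f(R_1^c)=\tfrac{1}{4}a+\tfrac{1}{2}b+\tfrac{1}{4}c$, which one can check is $\sd$-efficient at $R_1^c$ and satisfies every pairwise SP constraint along that single chain. Closing the argument therefore requires combining two or more chains, using both directions of each deviation, and exploiting the residual $\sigma\pi$-symmetry. I expect the decisive lever to be that $\rd(R^{c,b})$ and $\rd(R^{c,d})$ differ precisely in transferring mass $\tfrac{1}{4}$ from $b$ to $d$, so agent~2's two-way $\sd$-strategyproofness along the $R_1^c$ chain forces $f(R_1^c)$ to itself carry mass on $d$; this mass is then propagated back to $R$ by agent~1's two-way $\sd$-strategyproofness between $R$ and $R_1^c$ and cannot be absorbed into the tight value $\tfrac{1}{2}a+\tfrac{1}{2}b$ required by Step~1.
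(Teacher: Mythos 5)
Your first step is correct and matches the spirit of the paper's opening move: at the symmetric profile the combination of anonymity, neutrality and $\sd$-efficiency forces $f(R)=\nicefrac{1}{2}\,a+\nicefrac{1}{2}\,b$. But from there on the proposal is a plan rather than a proof, and the plan as stated does not close. The decisive lever you conjecture is in fact false: the lottery $\nicefrac{1}{4}\,a+\nicefrac{1}{2}\,b+\nicefrac{1}{4}\,c$ at $R_1^c$, which places \emph{no} mass on $d$, is compatible with two-way $\sd$-strategyproofness against \emph{both} $R^{c,b}$ and $R^{c,d}$: with respect to agent $2$'s coarse preference $\{b,d\},a,c$ the upper-contour sums of $\nicefrac{1}{4}\,a+\nicefrac{1}{2}\,b+\nicefrac{1}{4}\,c$ and of the uniform lottery $\rd(R^{c,d})$ coincide ($\nicefrac{1}{2}$, $\nicefrac{3}{4}$, $1$), so neither dominates the other, and in the reverse direction the refined preference $d,b,a,c$ assigns $\rd(R^{c,d})$ probability $\nicefrac{1}{4}$ on its top element versus $0$, again blocking domination. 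So agent $2$'s two-way constraints do \emph{not} force mass onto $d$ at $R_1^c$, and the entire propagation back to $R$ collapses. More broadly, the system of constraints you restrict yourself to --- deviations that merely refine the top indifference classes of agents $1$ and $2$ --- appears to be feasible, and you give no argument that it is not.

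The paper's actual proof indicates why a short chain of this kind cannot suffice: it runs through thirteen profiles, uses deviations in which agents \emph{coarsen} or restructure non-top indifference classes (e.g.\ agent $1$ reporting $c,\{a,b\},d$ and agent $2$ reporting $b,\{a,c\},d$), needs an auxiliary sub-argument (profiles $R^4$--$R^7$) just to bound $p^3(a)+p^3(c)\le\nicefrac{1}{2}$ via a second symmetric anchor profile, exploits Pareto-dominance of $d$ at several intermediate profiles, and obtains the final contradiction from a manipulation by agent $4$ (coarsening $b,c,a,d$ to $\{b,c\},a,d$), not by agents $1$ or $2$. None of these ingredients appears in your outline. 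Separately, your padding for general $m,n\ge 4$ via ``duplicate agents'' is problematic (it fails when $n$ is not a multiple of $4$ and distorts the $\rd$ values); the paper instead adds agents who are indifferent among all alternatives and extra alternatives ranked below $A'$ by everyone. The gap, then, is the core of the theorem: the actual derivation of a contradiction from the four axioms is missing, and the route you sketch is too narrow to produce one.
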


\begin{proof}
	We first prove that there is no SDS that satisfies the required properties for $n = 4$ and $m = 4$ and then use this statement to show that there is no such SDS for any larger number of agents and alternatives.
	 
	Without loss of generality, let $N = \{1,2,3,4\}$ and $A = \{a,b,c,d\}$ and assume for contradiction that $f$ is an SDS with the properties stated above. We will consider a sequence of preference profiles for which we (partially) determine the lottery returned by $f$. For a preference profile $R^k$ we denote by $p^k$ the lottery returned by $f$, \ie $p^k = f(R^k$). First, consider the following preference profile.
	\begin{align*}
		\pref^1_1\colon &\{a,c\},\{b,d\}&
		\pref^1_2\colon &\{b,d\},\{a,c\}&
		\pref^1_3\colon &\{a,d\},b,c&
		\pref^1_4\colon &\{b,c\},a,d
	\end{align*}
	Observe that ${\pref}_i = {\pref}_{\pi(i)}^{\sigma}$ for all $i\in N$ if $\pi = (1,2)(3,4)$ and $\sigma = (a,b)(c,d)$.
	Hence it follows from anonymity and neutrality that $f(R^1)(x) = f(R^1)(\sigma(x))$ for all $x\in A$ which implies that $p^1(a) = p^1(b)$ and $p^1(c) = p^1(d)$. If $p^1(c) = p^1(d) > 0$, then every agent $\sd$-prefers the lottery $\nicefrac{1}{2}\, a + \nicefrac{1}{2}\,b$ to $p^1$, contradicting $\sd$-efficiency. Hence $p^1(c) = p^1(d) = 0$ and it follows that $p^1 = \nicefrac{1}{2}\, a + \nicefrac{1}{2}\,b$. 
	\begin{align*}
		\pref^2_1\colon &\{a,c\},\{b,d\}&
		\pref^2_2\colon &\{b,d\},\{a,c\}&
		\pref^2_3\colon &a,d,\{b,c\}&
		\pref^2_4\colon &b,c,\{a,d\}
	\end{align*}
	With the same reasoning, we get that $p^2 = \nicefrac{1}{2}\, a + \nicefrac{1}{2}\,b$.
	
	We make another preliminary observation.
	\begin{align*}
		\pref^3_1\colon &a,c,\{b,d\}&
		\pref^3_2\colon &\{b,d\},a,c&
		\pref^3_3\colon &a,d,\{b,c\}&
		\pref^3_4\colon &\{b,c\},a,d
	\end{align*}
	With the permutations $\pi = (1,3)(2,4)$ and $\sigma = (a)(b)(c,d)$ it follows from anonymity and neutrality that $p^3(c) = p^3(d)$. 
	But no lottery with positive probability on both $c$ and $d$ is $\sd$-efficient for $R^3$. 
	Hence, $p^3(c) = p^3(d) = 0$. Assume for contradiction that $p^3(a) > \nicefrac{1}{2}$ and consider the following preference profile. 
	\begin{align*}
		\pref^4_1\colon &a,\{b,c,d\}&
		\pref^4_2\colon &\{b,d\},a,c&
		\pref^4_3\colon &a,d,\{b,c\}&
		\pref^4_4\colon &\{b,c\},a,d
	\end{align*}
	$\sd$-strategyproofness implies that $p^4(a) > \nicefrac{1}{2}$, as otherwise agent $1$ can benefit from reporting $\pref^3_1$ instead.  
	\begin{align*}
		\pref^5_1\colon &a,\{b,c,d\}&
		\pref^5_2\colon &\{b,d\},a,c&
		\pref^5_3\colon &a,\{b,c,d\}&
		\pref^5_4\colon &\{b,c\},a,d
	\end{align*}
	With the same reasoning as before but applied to agent $3$, we get $p^5(a) > \nicefrac{1}{2}$. Observe that $b$ Pareto-dominates $c$ and $d$ in $R^5$. Hence, $p^5(c) = p^5(d) = 0$ follows from $\sd$-efficiency. To derive a contradiction, we consider two more preference profiles.
	\begin{align*}
		\pref^6_1\colon &a,\{b,c,d\}&
		\pref^6_2\colon &b,\{a,c,d\}&
		\pref^6_3\colon &a,\{b,c,d\}&
		\pref^6_4\colon &\{b,c\},a,d
	\end{align*}
	Observe that again $p^6(c) = p^6(d) = 0$. If $p^6(a)\le\nicefrac{1}{2}$, agent $2$ in $R^5$ can benefit from reporting $\pref^6_2$ instead. Hence, $p^6(a) > \nicefrac{1}{2}$.
	Lastly, consider $R^7$.
	\begin{align*}
		\pref^7_1\colon &a,\{b,c,d\}&
		\pref^7_2\colon &b,\{a,c,d\}&
		\pref^7_3\colon &a,\{b,c,d\}&
		\pref^7_4\colon &b,\{a,c,d\}
	\end{align*}
	With the same reasoning as before but applied to agent $4$, $p^7(c) = p^7(d) = 0$ and $p^7(a) > \nicefrac{1}{2}$. However, it follows from anonymity and neutrality that $p^7(a) = p^7(b)$, a contradiction. Hence the assumption that $p^3(a) > \nicefrac{1}{2}$ was wrong. Combined with $p^3(c) = 0$, we get $p^3(a) + p^3(c)\le\nicefrac{1}{2}$.

	Now consider the preference profile $R^8$.
	\begin{align*}
		\pref^8_1\colon &\{a,c\},\{b,d\}&
		\pref^8_2\colon &\{b,d\},\{a,c\}&
		\pref^8_3\colon &a,d,\{b,c\}&
		\pref^8_4\colon &\{b,c\},a,d
	\end{align*}
	If agent $3$ reports $\pref^1_3$ instead, $f$ returns $p^1$. If $p^8(b) + p^8(c) > \nicefrac{1}{2}$, then $p^1(\sPref^8_3)^{\sd} p^8$, which contradicts $\sd$-strategyproofness. Hence, $p^8(b) + p^8(c) \le \nicefrac{1}{2}$.
	Similarly, if agent $4$ reports $\pref^2_4$ instead, $f$ returns $p^2$. 
	If $p^8(b) + p^8(c) < \nicefrac{1}{2}$, then $p^2(\sPref^8_4)^{\sd} p^8$, which again contradicts $\sd$-strategyproofness. Thus, together we have $p^8(b) + p^8(c) = \nicefrac{1}{2}$. 
	Moreover, if $p^8(d) > 0$ we necessarily have $p^2(\sPref^8_4)^{\sd} p^8$ given that $p^8(b) + p^8(c) = \nicefrac{1}{2}$. 
	Hence, $p^8(d) = 0$ and $p^8(a) = \nicefrac{1}{2}$. 
	\begin{align*}
		\pref^9_1\colon &a,c,\{b,d\}&
		\pref^9_2\colon &\{b,d\},\{a,c\}&
		\pref^9_3\colon &a,d,\{b,c\}&
		\pref^9_4\colon &\{b,c\},a,d
	\end{align*}
	If agent $2$ reports $\pref^3_2$ instead, then $f$ returns $p^3$. Assume for contradiction that $p^9(a) + p^9(c) > \nicefrac{1}{2}$. Then $p^3 (\sPref^9_2)^{\sd} p^9$, which contradictions $\sd$-strategyproofness. Hence, $p^9(a) + p^9(c)\le\nicefrac{1}{2}$. Moreover, if agent $1$ reports $\pref^8_1$, then $f$ returns $p^8$. Recall that $p^8(a) = \nicefrac{1}{2}$. If $p^9(a) < \nicefrac{1}{2}$, then together with $p^9(a) + p^9(c) \le\nicefrac{1}{2}$ this implies that $p^8(\sPref^9_1)^{\sd}p^9$, contradicting $\sd$-strategyproofness. So we get $p^9(a) = \nicefrac{1}{2}$. We use this insight to determine $p^8$. If $p^8(c) > 0$, then $p^8(\sPref^9_1)^{\sd}p^9$, which contradicts $\sd$-strategyproofness. Hence, $p^8(c) = 0$, which in turn implies that $p^8 = \nicefrac{1}{2}\,a + \nicefrac{1}{2}\, b$.
	\begin{align*}
		\pref^{10}_1\colon &\{a,c\},\{b,d\}&
		\pref^{10}_2\colon &b,\{a,c\},d&
		\pref^{10}_3\colon &a,d,\{b,c\}&
		\pref^{10}_4\colon &\{b,c\},a,d
	\end{align*}
	Note that $a$ Pareto-dominates $d$ in $R^{10}$, which implies that $p^{10}(d) = 0$ as $f$ is $\sd$-efficient. If agent $2$ reports $\pref^8_2$, then $f$ return $p^8$. If $p^{10}(b) > \nicefrac{1}{2}$, then $p^{10}(\sPref^8_2)^{\sd}p^8$, and if $p^{10}(b) < \nicefrac{1}{2}$, then $p^8(\sPref^{10}_2)^{\sd}p^{10}$. Both cases contradict $\sd$-strategyproofness. Hence, $p^{10}(b) = \nicefrac{1}{2}$. 
	\begin{align*}
		\pref^{11}_1\colon &c,a,\{b,d\}&
		\pref^{11}_2\colon &b,\{a,c\},d&
		\pref^{11}_3\colon &a,d,\{b,c\}&
		\pref^{11}_4\colon &\{b,c\},a,d
	\end{align*}
	Again, $d$ is Pareto-dominated by $a$ in $R^{11}$, and hence $p^{11}(d) = 0$. If agent $1$ reports $\pref^{10}_1$ instead, then $f$ returns $p^{10}$. If $p^{11}(b) < \nicefrac{1}{2}$, then $p^{11}(\sPref^{10}_1)^{\sd}p^{10}$, which contradicts $\sd$-strategyproofness. Hence, $p^{11}(b) \ge \nicefrac{1}{2}$.
	\begin{align*}
		\pref^{12}_1\colon &c,\{a,b\},d&
		\pref^{12}_2\colon &b,\{a,c\},d&
		\pref^{12}_3\colon &a,d,\{b,c\}&
		\pref^{12}_4\colon &\{b,c\},a,d
	\end{align*}
	Again, $d$ is Pareto-dominated by $a$ in $R^{12}$, and hence $p^{12}(d) = 0$. Moreover, with the permutations $\pi = (1,2)(3)(4)$ and $\sigma = (a)(b,c)(d)$ it follows from anonymity and neutrality that $p^{12}(b) = p^{12}(c)$. As $p^{11}(b) \ge \nicefrac{1}{2}$ and $p^{12}(b) = p^{12}(c)$, we have that $p^{12}(b)\le p^{11}(b)$. If $p^{12}(c) < p^{11}(c)$, then $p^{11}(\sPref^{12}_1)^{\sd}p^{12}$ and, on the other hand, if $p^{12}(c) > p^{11}(c)$, then $p^{12}(\sPref^{11}_1)^{\sd}p^{11}$. Both cases contradict $\sd$-strategyproofness. So together we have $p^{12}(c) = p^{11}(c)$. Next, if $p^{12}(a) > p^{11}(a)$, then agent $1$ in $R^{11}$ can benefit from reporting $\pref^{12}_1$ instead. So in summary, $p^{12}(a) + p^{12}(c) \le p^{11}(a) + p^{11}(c) \le\nicefrac{1}{2}$. As $p^{12}(b) = p^{12}(c)$, we have $p^{12} = \nicefrac{1}{2}\,b + \nicefrac{1}{2}\,c$.
	\begin{align*}
		\pref^{13}_1\colon &c,\{a,b\},d&
		\pref^{13}_2\colon &b,\{a,c\},d&
		\pref^{13}_3\colon &a,d,\{b,c\}&
		\pref^{13}_4\colon &b,c,a,d
	\end{align*}
	Recall that $f$ is an extension of $\rd$ and hence, $f(R^{13}) = \nicefrac{1}{4}\,a + \nicefrac{1}{2}\,b + \nicefrac{1}{4}\,c$. But $p^{12}(\sPref^{13}_4)^{\sd}p^{13}$, \ie agent $4$ can manipulate by reporting $\pref^{12}_4$ instead. This contradicts $\sd$-strategyproofness.
	
	Now let $|N|\ge 4$ and $|A|\ge 4$ be arbitrary and assume that $f$ is an anonymous, neutral, $\sd$-efficient, and $\sd$-strategyproof SDS. We use $f$ construct an SDS $f'$ that satisfies these properties for $N' = \{1,2,3,4\}$ and $A' = \{a,b,c,d\}$ which is a contradiction. Assume without loss of generality that $A'\subseteq A$. For every preference profile $R'$ on $N'$ and $A'$, choose some profile $R$ on $N$ and $A$ such that the preferences of the first $4$ agents over $A'$ coincide in $R$ and $R'$ and these agents prefer all alternatives in $A'$ to all alternatives in $A\setminus A'$ and the remaining agents are indifferent between all alternatives in $A$. Observe that only lotteries over $A'$ are $\sd$-efficient in $R$. Thus $f'(R') = f(R)$ is well-defined. It is easily verified that $f'$ inherits anonymity, neutrality, $\sd$-efficiency, and $\sd$-strategyproofness from $f$ which contradicts what we have shown above.
\end{proof}

\section*{Acknowledgments}
This material is based upon work supported by the Deutsche Forschungsgemeinschaft under grant {BR~2312/10-1}, by the TUM Institute for Advanced Study through a Hans Fischer Senior Fellowship, and a Stanford Graduate Fellowship.

\bibliographystyle{plainnat}

\end{document}